\def\RSsubtxt{section~}\newref{sub}{name = \RSsubtxt}}
\def\RSthmtxt{theorem~}\newref{thm}{name = \RSthmtxt}}
\def\RSlemtxt{lemma~}\newref{lem}{name = \RSlemtxt}}
 \definecolor{BLACK}{gray}{0}
 \definecolor{WHITE}{gray}{1}
 \definecolor{RED}{rgb}{1,0,0}
 \definecolor{GREEN}{rgb}{0,1,0}
 \definecolor{BLUE}{rgb}{0,0,1}
 \definecolor{CYAN}{cmyk}{1,0,0,0}
 \definecolor{MAGENTA}{cmyk}{0,1,0,0}
 \definecolor{YELLOW}{cmyk}{0,0,1,0}
  \theoremstyle{plain}
  \newtheorem{thm}{\protect\theoremname}
  \theoremstyle{plain}
  \newtheorem{lem}{\protect\lemmaname}
   \newenvironment{proof}[1][\proofname]{\par
     \normalfont\topsep6\p@\@plus6\p@\relax
     \trivlist
     \itemindent\parindent
     \item[\hskip\labelsep
           \scshape
       #1]\ignorespaces
   }{%
     \endtrivlist\@endpefalse
   }
   \providecommand{\proofname}{Proof}
  \theoremstyle{plain}
  \newtheorem{prop}{\protect\propositionname}
\providecommand{\lemmaname}{\inputencoding{latin9}Lemma}
\providecommand{\propositionname}{\inputencoding{latin9}Proposition}
\providecommand{\theoremname}{\inputencoding{latin9}Theorem}
\begin{document}
\global\long\def\ket#1{\left|#1\right\rangle }

\global\long\def\Zr{\mathbb{Z}\left[i,1/\sqrt{2}\right]}

\global\long\def\Z{\mathbb{Z}}

\global\long\def\id{\mathbb{I}}

\global\long\def\w{\omega}

\global\long\def\k{\kappa}

\global\long\def\tr{\mathrm{Tr}}

\global\long\def\normf#1{\left\Vert #1\right\Vert _{Fr}}

\global\long\def\norm#1{\left\Vert #1\right\Vert }

\global\long\def\ve{\varepsilon}

\global\long\def\Re{\mathrm{Re}}

\global\long\def\Im{\mathrm{Im}}

\global\long\def\bra#1{\left\langle #1\right|}

\global\long\def\op#1#2{\left|#1\right\rangle \left\langle #2\right|}

\global\long\def\ip#1#2{\left\langle #1|#2\right\rangle }

\title{Synthesis of unitaries with Clifford+T circuits}

\author{Vadym Kliuchnikov}

\email{v.kliuchnikov@gmail.com}

\affiliation{Institute for Quantum Computing, University of Waterloo, Waterloo,
ON, Canada}

\affiliation{David R. Cheriton School of Computer Science, University of Waterloo,
Waterloo, ON, Canada}

\pacs{03.67.Lx}
\begin{abstract}
We describe a new method for approximating an arbitrary $n$ qubit
unitary with precision $\varepsilon$ using a Clifford and T circuit
with $O(4^{n}n(\log(1/\varepsilon)+n))$ gates. The method is based
on rounding off a unitary to a unitary over the ring $\mathbb{Z}[i,1/\sqrt{2}]$
and employing exact synthesis. We also show that any $n$ qubit unitary
over the ring $\mathbb{Z}[i,1/\sqrt{2}]$ with entries of the form
$(a+b\sqrt{2}+ic+id\sqrt{2})/2^{k}$ can be exactly synthesized using
$O(4^{n}nk)$ Clifford and T gates using two ancillary qubits. This
new exact synthesis algorithm is an improvement over the best known
exact synthesis method by B. Giles and P. Selinger requiring $O(3^{2^{n}}nk)$
elementary gates. 
\end{abstract}
\maketitle

Implementing a unitary operation using a universal gate set is a fundamental
problem in quantum computing. The problem naturally arises when we
want to implement some quantum algorithm on a fault tolerant quantum
computer. Most fault tolerant protocols allow one to implement only
Clifford circuits (those generated by CNOT, Hadamard and Phase gates).
To achieve universal quantum computation one needs to add at least
one non-Clifford gate. One of the common examples is a T:=$\left(\begin{array}{cc}
1 & 0\\
0 & e^{i\pi/4}
\end{array}\right)$ gate. Many unitaries that occur in quantum algorithms cannot be implemented
exactly using the Clifford and T gate set and must be approximated.
However, the study of unitaries that can be implemented exactly using
Cliffords and T has been useful for finding more efficient approximations.
An example is an asymptotically optimal algorithm for approximating
single qubit rotations using Clifford and T gates\cite{noans}. This
algorithm achieves polynomial speed-up over the Solovay-Kitaev algorithm\cite{skalg,kbook}
(which is applicable to generic universal gate sets). The key ingredients
of this algorithm are: an algorithm for synthesizing optimal single
qubit circuits given a unitary that can be implemented exactly\cite{Kliuchnikov2012}
and an efficient round-off procedure that approximates an arbitrary
single qubit unitary by the one that is exactly implementable. We
extend these ideas to the approximate synthesis of unitaries acting
on multiple qubits and show that any $n$ qubit unitary can be approximated
with precision $\varepsilon$ using at most $O\left(4^{n}n\left(\log\left(1/\ve\right)+n\right)\right)$
Clifford and T gates and two ancillae. Our procedure results in a
slightly larger asymptotic number of gates in comparison to first
using an asymptotically optimal decompositions of an $n$ qubit unitary
into $O\left(4^{n}\right)$ single qubit unitaries and CNOT gates\cite{optimalSq,ptrans}
and then approximating each single qubit unitary with Clifford and
T circuit. Both approaches are asymptotically optimal when the number
of qubits is fixed, according to the lower bounds established in \cite{sqans,eda}.
However, our decomposition is more directly related to the structure
of the initial unitary and may be beneficial in two cases: when the
unitary is partially specified, or when most of the entries of the
unitary are over the ring~$\Zr.$ In contrast to decompositions used
in \cite{ptrans,optimalSq,ncbook} our decomposition does not require
taking square roots. We use only addition, subtraction and multiplication
by $1/\sqrt{2}$; these operations preserve elements of $\Zr.$ 

We also improve multiple qubit exact synthesis of unitaries over the
ring $\Zr.$ B. Giles and P. Selinger\cite{mqes} showed that any
$n$ qubit unitary over the ring $\Zr$ with entries of the form $((a+b\sqrt{2})+i(c+d\sqrt{2}))/\sqrt{2}^{\kappa}$
can be synthesized exactly using at most one ancilla. However, their
synthesis method requires $O\left(3^{2^{n}}n\kappa\right)$ gates,
which is far from information theoretic bounds. In this paper we propose
a synthesis method that uses two ancillae and requires $O\left(4^{n}n\kappa\right)$
gates. The main results of the paper are summarized in the following
theorems: 
\begin{thm}
\label{thm:approx}Any $n$ qubit unitary can be approximated within
Frobenius distance $\varepsilon$ using $O\left(4^{n}n\left(C\log\left(1/\ve\right)+n\right)\right)$
Clifford and T gates and at most two ancillae.
\end{thm}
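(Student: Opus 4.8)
The plan is to reduce the approximate synthesis problem to the exact synthesis result already announced in the abstract (the $O(4^n n k)$ Clifford+T synthesis for unitaries over $\Zr$). The strategy proceeds in two stages: first, a \emph{round-off} step that replaces an arbitrary target unitary $U$ by a nearby unitary $U'$ whose entries lie in $\Zr$ with a controlled denominator exponent $k$; second, an application of exact synthesis to $U'$. The key quantitative relationship to establish is that achieving Frobenius distance $\varepsilon$ requires $k = O(\log(1/\varepsilon) + n)$, so that plugging this into the exact-synthesis gate count $O(4^n n k)$ yields precisely $O(4^n n (\log(1/\varepsilon) + n))$.

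First I would set up the round-off procedure. Given $U$ and a target precision, I want to approximate each entry $U_{jk}$ by an element of the form $(a + b\sqrt 2 + i c + i d\sqrt 2)/2^m$ with $a,b,c,d \in \Z$. Because the numbers of the form $(a+b\sqrt2)/2^m$ are dense in $\mathbb{R}$ as $m$ grows, each real and imaginary part can be approximated to within roughly $2^{-m}$, so entrywise error is controlled by $m$. The nontrivial point is that a naive entrywise rounding produces a matrix that is close to unitary but generally \emph{not} unitary, whereas exact synthesis demands an exact unitary over $\Zr$. I would therefore round off column by column, using the freedom in the last few components (or an auxiliary correction) to force each column to have exactly unit norm over the ring, and force successive columns to be exactly orthogonal, so that the resulting $U'$ is genuinely unitary over $\Zr$. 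The error analysis then bounds $\normf{U - U'}$ in terms of $2^{-m}$ and the dimension $2^n$, giving the relation $m = O(\log(1/\varepsilon) + n)$, where the additive $n$ accounts for the $2^n$ entries per column accumulating error and for the norm-renormalization.

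The main obstacle I expect is exactly this unitarization: ensuring the rounded matrix is \emph{exactly} unitary over the ring rather than merely approximately unitary, while keeping the denominator exponent $k$ under the stated bound. One clean way to handle this is to build $U'$ incrementally via a Gram--Schmidt-style procedure performed inside the ring $\Zr$, rounding each vector and then exactly orthonormalizing it against the previously fixed columns; the arithmetic stays in $\Zr$ because the only operations needed are addition, subtraction, and multiplication by $1/\sqrt2$, all of which preserve $\Zr$, as emphasized in the introduction. I would need to verify that each orthonormalization step inflates the denominator by only an additive constant number of powers of $2$, so that after $2^n$ columns the total exponent is still $O(\log(1/\varepsilon)+n)$.

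Finally, once $U'$ is an exact $\Zr$-unitary with denominator exponent $k = O(\log(1/\varepsilon)+n)$ and $\normf{U - U'} \le \varepsilon$, I would invoke the exact-synthesis theorem to implement $U'$ with $O(4^n n k)$ Clifford+T gates and two ancillae. Substituting the bound on $k$ gives the claimed gate count $O(4^n n(C\log(1/\varepsilon) + n))$, with the constant $C$ tracking the density constant of $(a+b\sqrt2)/2^m$ in the round-off step, and the two ancillae inherited directly from exact synthesis. This completes the reduction.
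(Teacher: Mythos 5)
Your reduction (round off $U$ to a ring unitary $U'$, then invoke \thmref{exact}) is not the paper's route, and it has a genuine gap at exactly the step you flag as the main obstacle: unitarization. Your proposed fix --- Gram--Schmidt ``performed inside the ring $\Zr$'' --- does not work. Gram--Schmidt needs two operations that leave the ring: normalization, which divides by $\norm{\ket v}=\sqrt{\ip vv}$ and hence takes the square root of a ring element (generically not a square in $\Zr$), and projection onto an unnormalized vector, which divides by $\ip uu$; since $\Zr$ is not a field, such quotients generically do not lie in $\Zr$ and certainly do not keep denominators of the form $\sqrt2^{\,k}$. The introduction's remark that only addition, subtraction and multiplication by $1/\sqrt2$ are needed refers to the paper's reflection decomposition (building the vectors $\ket{w_j^{\pm}}$ from the columns of $U$), not to any orthonormalization procedure; indeed that remark is made precisely to emphasize that the paper's decomposition \emph{avoids} the square roots your procedure would require. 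Beyond this algebraic obstruction, your claims that each column can be made \emph{exactly} unit norm, \emph{exactly} orthogonal to all previous columns, close to the original column, and with only additive denominator growth per step are all asserted rather than proven; enforcing exact orthogonality of all pairs of rounded columns over $\Zr$ is a hard system of Diophantine constraints, and nothing in the paper (or in your argument) solves it.

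The paper sidesteps the whole issue by never rounding the unitary as a matrix. \lemref{refl-decomp} \emph{exactly} decomposes a one-ancilla simulation of $U$ into $2^{n}$ reflections $R_{\ket{w_j^{\pm}}}$; since $\id-2\op{\phi}{\phi}$ is unitary for \emph{any} unit vector $\ket{\phi}$, each factor can be rounded independently, and the only exact constraint ever needed is unit norm of a single vector --- never orthogonality between different rounded vectors. That single constraint is restored number-theoretically (\lemref{approx}): the vectors $\ket{w_j^{\pm}}$ are guaranteed to have at least two zero coordinates, which are used as slack and filled with $(a+bi)/2^{m}$ and $(c+di)/2^{m}$, where $a^{2}+b^{2}+c^{2}+d^{2}$ is chosen to equal the norm deficit times $4^{m}$; Lagrange's four-square theorem guarantees a solution and the Rabin--Shallit algorithm finds it efficiently. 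Each approximating reflection is then synthesized exactly via the column lemma (\lemref{refl-synth}), and taking per-reflection precision $2^{-n}\ve$ (via \propref{refl-dist}) over the product of $2^{n}$ reflections gives the stated gate count with two ancillae. If you want to rescue your plan, the repair essentially \emph{is} the paper's method: replace ``make the rounded matrix unitary'' by ``decompose exactly into reflections and round each reflection's defining vector,'' which trades $4^{n}$ coupled orthogonality constraints for $2^{n}$ independent norm constraints, each solvable by the four-square theorem.
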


\begin{thm}
\label{thm:exact}Any $n$ qubit unitary with entries of the form
$((a+b\sqrt{2})+i(c+d\sqrt{2}))/\sqrt{2}^{\k}$ can be exactly implemented
using $O\left(4^{n}n\k\right)$ Clifford and T gates using at most
two ancillae. 
\end{thm}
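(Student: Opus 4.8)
The plan is to prove Theorem~\ref{thm:exact} by an exact \emph{column-by-column} synthesis that lifts the single-qubit algorithm of~\cite{Kliuchnikov2012} to the full $N\times N$ matrix, where $N=2^{n}$. I would measure a unitary $U$ over $\Zr$ by its \emph{least denominator exponent} $k$ (at most the $\k$ of the statement): the smallest $k$ for which $\sqrt{2}^{\,k}U$ has all entries in the ring of integers $\Z[\w]$ (recall $\Z[\sqrt{2},i]\subset\Z[\w]$, and the missing $\w$-phases will be supplied by T gates). Synthesis then means constructing a Clifford$+$T circuit $W$, using the two ancillae, with $WU=\id$; reversing $W$ gives the implementation. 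I would clear one column at a time: once the first $j$ columns equal $\ket{0},\dots,\ket{j-1}$, I reduce column $j{+}1$ to $\ket{j}$, so the task splits into $N$ \emph{state-synthesis} problems.

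The heart of the argument is a denominator-reduction lemma for a single column. Let $v$ be a unit vector over $\Zr$ of least denominator exponent $k\ge 1$ and set $u=\sqrt{2}^{\,k}v\in\Z[\w]^{N}$. Unitarity forces the integer norm identity $\sum_{j}\left|u_{j}\right|^{2}=2^{k}$; reducing modulo the ramified prime above $2$ (equivalently tracking $\left|u_{j}\right|^{2}\bmod 2$) shows that the coordinates whose residue prevents divisibility by $\sqrt{2}$ occur in an even number and can be matched in pairs, possibly after applying individual T (i.e.\ $\w$) gates to align their residue classes. For each matched pair $\{a,b\}$ I apply the two-level Hadamard $u_{a}\mapsto(u_{a}+u_{b})/\sqrt{2}$, $u_{b}\mapsto(u_{a}-u_{b})/\sqrt{2}$; aligned residues make both images divisible by $\sqrt{2}$, so after treating every pair the whole column loses a factor of $\sqrt{2}$ and its exponent drops by at least one. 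Iterating at most $k$ times produces a column over $\Z[i]$, which a final Clifford signed permutation (entries in $\{1,i,-1,-i\}$) maps to $\ket{j}$.

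For the gate count I would show that each two-level Hadamard on a coordinate pair $\{\ket{a},\ket{b}\}$ costs $O(n)$ Clifford$+$T gates with the two ancillae, addressing the pair by standard linear-size multiply-controlled-NOT gadgets and applying one $H$ on the flagged qubit. A single exponent decrement of a column uses $O(N)$ such operations (one per pair); there are $O(k)$ decrements per column and $N$ columns, for $O(N\cdot N\cdot k\cdot n)=O(4^{n}nk)=O(4^{n}n\k)$ gates in total, with the residual $\Z[i]$-permutations adding only $O(4^{n}n)$.

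I expect the pairing step to be the main obstacle: I must verify that the residue calculus in $\Z[\w]$ really guarantees an even number of ``obstructing'' coordinates and that, after the T-gate alignment, each two-level Hadamard produces entries divisible by $\sqrt{2}$ and is therefore genuinely denominator-reducing and Clifford-realizable. A secondary issue is the cross-column bookkeeping: the operations that clear one column also act on the untreated columns, so I would invoke the global norm identity to argue their least denominator exponents never exceed the running $k$, preserving the $O(4^{n}nk)$ bound, and confirm that two ancillae suffice for every multiply-controlled gadget used.
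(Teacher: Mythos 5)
Your approach---triangularizing $U$ column by column with two-level Hadamards and residue pairing---is essentially the Giles--Selinger algorithm \cite{mqes}, i.e., the very method whose $O\left(3^{2^{n}}n\k\right)$ gate count this theorem is meant to improve upon, and the issue you relegate to ``secondary bookkeeping'' is in fact the fatal one. When you apply a two-level Hadamard $u_{a}\mapsto(u_{a}+u_{b})/\sqrt{2}$, $u_{b}\mapsto(u_{a}-u_{b})/\sqrt{2}$ to clear column $j$, the residues are aligned only in column $j$; in every untreated column the same row operation generically produces entries whose least denominator exponent is one \emph{larger} than before. The global norm identity cannot prevent this: $\sum_{i}\left|u_{ij}\right|^{2}=2^{k_{j}}$ holds automatically for any unit vector over the ring, whatever $k_{j}$ is, so it places no cap on the exponent; unitarity does not help either (already for a $2\times2$ unitary over $\Z[i]$, a single Hadamard raises the exponent from $0$ to $1$). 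Consequently each of the $O(k)$ Hadamard rounds used to clear one column can raise the exponent of all remaining columns by $1$: after clearing column $1$ the rest sit at exponent $\le 2k$, after column $2$ at $\le 4k$, and clearing column $j$ costs $O\left(2^{j}k\right)$ rounds. Summing over $j=1,\dots,2^{n}$ gives a total that is doubly exponential in $n$, not $O\left(4^{n}n\k\right)$---exactly the blowup suffered by \cite{mqes}.

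The paper escapes this by abandoning sequential column reduction altogether. It uses a Householder-type decomposition (\lemref{refl-decomp}): the doubled unitary $U'=\op 01\otimes U+\op 10\otimes U^{\dagger}$ is a product of $2^{n}$ reflections $R_{\ket{w_{j}^{-}}}$, where each unit vector $\ket{w_{j}^{-}}$ is built directly from the $j$-th column of the \emph{original} $U$ and has denominator exponent $\k+1$ regardless of the other factors. Each reflection is then synthesized in isolation (\lemref{refl-synth}): prepare $\ket{w_{j}^{-}}$ from $\ket 0$ using the column lemma of \cite{mqes} at a cost of $O\left(2^{n}n\k\right)$ gates, and conjugate a multiply-controlled $Z$ by that preparation. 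Because no factor operates on the output of another, there is no exponent growth to control, and the bound $2^{n}\cdot O\left(2^{n}n\k\right)=O\left(4^{n}n\k\right)$ follows immediately. To salvage a column-by-column scheme you would need a genuinely new idea that keeps the exponents of the untreated columns bounded---precisely the obstacle your proposal assumes away.
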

To prove both results we use a variant of the Householder decomposition
which expresses a matrix as a product of reflection operators and
a diagonal unitary matrix. In our case the diagonal unitary matrix
is always the identity. We define a reflection operator constructed
from a unit vector $\ket{\psi}$ as $R_{\ket{\psi}}=\id-2\op{\psi}{\psi}$.
Our structure-preserving decomposition is given by the following lemma: 
\begin{lem}
\label{lem:refl-decomp}Let $U$ be a unitary acting on $n$ qubits
and let $\left\{ u_{1},\ldots,u_{2^{n}}\right\} $ be the columns
of $U.$ The unitary $U$ can be simulated using the unitary
\[
U'=\op 01\otimes U+\op 10\otimes U^{\dagger}.
\]
Unitary $U'$ is a product of reflection operators constructed from
the family of unit vectors 
\[
\ket{w_{j}^{\pm}}=\left(\ket 1\otimes\ket j\pm\ket 0\otimes\ket{u_{j}}\right)/\sqrt{2},\mbox{ for }j=1,\ldots,2^{n}.
\]
\end{lem}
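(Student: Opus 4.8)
The plan is to exploit the special structure of $U'$: it is manifestly Hermitian, since $U'^{\dagger}=\op10\otimes U^{\dagger}+\op01\otimes U=U'$, and a direct check gives $U'^{2}=\id$, so $U'$ is a unitary involution whose spectrum is contained in $\left\{ +1,-1\right\} $. An operator that is simultaneously unitary and an involution is exactly the kind of object that factors into reflections, so I would first pin down its eigenspaces and then translate the spectral decomposition into a product of the operators $R_{\ket{w_j^{\pm}}}$. The simulation claim is immediate beforehand: applying $U'$ to the data state with the first qubit prepared in $\ket1$ gives $U'(\ket1\otimes\ket\psi)=\ket0\otimes U\ket\psi$, so reading out the data register after this step reproduces $U\ket\psi$.

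Next I would compute the action of $U'$ on the natural basis of each two-dimensional block. Using $\ket{u_{j}}=U\ket j$ together with $U^{\dagger}U=\id$, one finds $U'(\ket1\otimes\ket j)=\ket0\otimes\ket{u_{j}}$ and $U'(\ket0\otimes\ket{u_{j}})=\ket1\otimes\ket j$, i.e. $U'$ swaps the two basis vectors spanning the $j$-th block. From this it follows at once that $U'\ket{w_j^{\pm}}=\pm\ket{w_j^{\pm}}$, so the $\ket{w_j^{+}}$ are the $+1$ eigenvectors and the $\ket{w_j^{-}}$ the $-1$ eigenvectors. The key structural fact I would then establish is that the whole family is orthonormal: a short inner-product calculation gives $\ip{w_j^{s}}{w_k^{t}}=\frac{1}{2}\delta_{jk}(1+st)$ for $s,t\in\left\{ +,-\right\} $, where the off-diagonal ($j\neq k$) terms vanish precisely because the columns $\left\{ u_{j}\right\} $ are orthonormal (the one place unitarity of $U$ is used) and the basis $\left\{ \ket j\right\} $ is orthonormal, while the orthogonality of the control states $\ket0,\ket1$ kills the remaining cross terms. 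Since there are $2^{n+1}$ such vectors in a space of dimension $2^{n+1}$, they form a complete orthonormal basis, and hence $U'=P^{+}-P^{-}$ with $P^{\pm}=\sum_{j}\op{w_j^{\pm}}{w_j^{\pm}}$ and $P^{+}+P^{-}=\id$.

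Finally I would convert this spectral form into the advertised product. Because the vectors $\left\{ \ket{w_j^{-}}\right\} $ are mutually orthonormal, the projectors $\op{w_j^{-}}{w_j^{-}}$ are pairwise orthogonal, so the reflections $R_{\ket{w_j^{-}}}$ commute and all cross terms drop out of their product, giving $\prod_{j=1}^{2^{n}}R_{\ket{w_j^{-}}}=\id-2\sum_{j}\op{w_j^{-}}{w_j^{-}}=\id-2P^{-}=P^{+}-P^{-}=U'$. This realizes $U'$ as a product of $2^{n}$ reflections drawn from the stated family, and it also explains why no extra diagonal factor survives: the Householder-type diagonal is trivial exactly because $U'$ is an involution. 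I expect the only real work, rather than a genuine obstacle, to be the orthonormality bookkeeping of the previous paragraph; I would also record the harmless sign ambiguity, namely that using the $\ket{w_j^{+}}$ reflections instead yields $\id-2P^{+}=-U'$, which differs from $U'$ only by a global phase and is equally valid for simulating $U$.
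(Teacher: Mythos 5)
Your proposal is correct and follows essentially the same route as the paper's proof: identify $\ket{w_j^{\pm}}$ as the $\pm1$ eigenvectors of $U'$, use the spectral decomposition $U'=P^{+}-P^{-}$, and convert $\id-2P^{-}$ into the product $\prod_{j}R_{\ket{w_j^{-}}}$ via orthogonality of the projectors. The only difference is that you explicitly verify the orthonormality and completeness of the family $\left\{\ket{w_j^{\pm}}\right\}$ (a dimension count plus the inner-product formula $\ip{w_j^{s}}{w_k^{t}}=\tfrac{1}{2}\delta_{jk}(1+st)$), which the paper leaves implicit when invoking the spectral theorem.
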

\begin{proof}
By direct calculation we check that $U'$ maps $\ket 1\otimes\ket{\phi}$
to $\ket 0\otimes U\ket{\phi}$ for any $n$ qubit state $\ket{\phi}.$
Next we observe that $\ket{w_{j}^{\pm}}$ are eigenvectors of $U'$
with eigenvalues $\pm1.$ Defining $P_{j}^{\pm}$ to be projectors
on subspaces spanned by $\ket{w_{j}^{\pm}}$ and using the spectral
theorem we express $U'$ as $\sum_{j=1}^{2^{n}}\left(P_{j}^{+}-P_{j}^{-}\right).$
Since $\sum_{j=1}^{2^{n}}\left(P_{j}^{+}+P_{j}^{-}\right)$ is the
identity operator and all projectors $P_{j}^{\pm}$ are orthogonal
we can write: 
\[
U'=I-2\sum_{j=1}^{2^{n}}P_{j}^{-}=\prod_{j=1}^{2^{n}}\left(I-2P_{j}^{-}\right).
\]
The right hand side is a product of $2^{n}$ reflection operators,
as required. 
\end{proof}
It is not difficult to see that if $\ket{u_{j}}$ has coordinates
in the computational basis over the ring $\Zr$ then the unit vector
$\ket{w_{j}^{\pm}}$ also does. This the reason why we call our decomposition
structure-preserving. Exactly synthesizing a reflection operator is
not more difficult than exactly preparing corresponding unit vector: 
\begin{lem}
\label{lem:refl-synth}Any reflection operator $R_{\ket{\phi}}$
where $\ket{\phi}$ has coordinates in the computational basis of
the form $((a+b\sqrt{2})+i(c+d\sqrt{2}))/\sqrt{2}^{\k}$ can be implemented
using $O\left(2^{n}n\k\right)$ Clifford and T gates and at most one
ancillae. \end{lem}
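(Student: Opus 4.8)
The plan is to reduce synthesizing $R_{\ket\phi}$ to exactly preparing the state $\ket\phi$. For any unitary $V$ with $V\ket{0^n}=\ket\phi$ we have $R_{\ket\phi}=\id-2\op\phi\phi=V\left(\id-2\op{0^n}{0^n}\right)V^\dagger$, since $V\op{0^n}{0^n}V^\dagger=\op\phi\phi$. The inner factor $\id-2\op{0^n}{0^n}$ flips the sign of the $\ket{0^n}$ amplitude only; it is a multiply-controlled-$Z$ conjugated by Pauli $X$ gates, implementable with $O(n)$ Clifford+T gates and one ancilla by the standard Toffoli-ladder construction. So it remains to build $V$ --- equivalently its inverse, which maps $\ket\phi$ to $\ket{0^n}$ --- from $O(2^n n\k)$ Clifford+T gates.

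To build $V^\dagger$ I would run an exact state-synthesis procedure driving the smallest denominator exponent of the amplitude vector to zero. Write the coordinates of the running state as $x_j/\sqrt2^{\k}$ with numerators $x_j\in\Z[\w]$; unit norm gives $\sum_j|x_j|^2=2^{\k}$. The elementary move acts on a pair of amplitudes $(x_a,x_b)$ by a two-level Hadamard, optionally preceded by a single-level $T$, $S$, or $Z$ phase: because $(x_a\pm x_b)/\sqrt2$ is divisible by $\sqrt2$ exactly when $x_a\equiv x_b\pmod 2$, aligning the residues of a matched pair and applying $H$ lowers the denominator exponent of both components. Clearing all $\sqrt2$-odd components this way lowers $\k$ by at least one; iterating $O(\k)$ times reaches $\k=0$, where unit norm forces a single coordinate of unit modulus, i.e.\ a computational basis state, which is sent to $\ket{0^n}$ by $O(n)$ CNOT and $X$ gates together with a final phase correction.

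For the count, each two-level operation is a single-qubit Clifford+T gate controlled on the remaining $n$ qubits, which selects the coupled pair of basis states; such a multiply-controlled gate costs $O(n)$ Clifford+T gates with one reusable ancilla. One reduction pass touches $O(2^n)$ amplitude pairs and thus costs $O(2^n n)$, and there are $O(\k)$ passes, for a total of $O(2^n n\k)$. The same ancilla serves every controlled operation and the closing $\id-2\op{0^n}{0^n}$ reflection, so one ancilla suffices.

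The main obstacle is the number-theoretic content of the reduction step: one must show that, whenever $\k>0$, the components with $\sqrt2$-odd numerators occur in even number and can be paired so that, after unit (power-of-$\w$) phase corrections realizable by $T$, $S$, $Z$, the two numerators of each pair agree modulo $2$. This is exactly the residue analysis in $\Z[\w]/2\Z[\w]$ underpinning the single-qubit algorithm of Kliuchnikov--Maslov--Mosca~\cite{Kliuchnikov2012} and its $n$-qubit extension by Giles--Selinger~\cite{mqes}. The remaining care is organizational: choosing the pairings so that each pass stays $O(2^n n)$ rather than blowing up, and confirming that every pass lowers $\k$ by at least one so that $O(\k)$ passes suffice.
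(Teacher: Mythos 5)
Your proposal is correct and follows essentially the same route as the paper: both reduce the reflection to exact state preparation via the conjugation identity $UR_{\ket 0}U^{\dagger}=R_{U\ket 0}$, implement $R_{\ket 0}$ as a multiply-controlled $Z$ with $O(n)$ gates and one ancilla, and obtain the preparation unitary at cost $O(2^{n}n\k)$ from the column lemma of Giles--Selinger \cite{mqes}. The only difference is presentational: the paper cites the column lemma as a black box, while you sketch its internal denominator-exponent reduction, deferring the number-theoretic core to the same references.
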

\begin{proof}[Proof of \lemref{refl-synth}]
 Note that $UR_{\ket 0}U^{\dagger}=R_{U\ket 0}$. Therefore, to implement
the reflection operator with corresponding unit vector $\ket{\phi}$
it is enough to find a $U$ that prepares $\ket{\phi}$ starting from
$\ket 0.$ The column lemma \cite{mqes} provides a construction for
$U$ requiring $O\left(2^{n}n\k\right)$ Clifford and T gates and
one ancilla. Unitary $R_{\ket 0}$ is a multiple controlled $Z$ operator
and can be implemented with $O\left(n\right)$ gates and one ancilla,
for example using the construction from \cite{ncbook}. We conclude
that we need $O\left(2^{n}n\k\right)$ Clifford and T gates in total
to implement $R_{\ket{\phi}}.$ 
\end{proof}
Now we have all results required to proof Theorem \ref{thm:exact}:
\begin{proof}[Proof of \thmref{exact}]
The construction from \lemref{refl-decomp} allows one to simulate
an $n$ qubit unitary using one ancilla and $2^{n}$ reflection operators.
The unit vectors corresponding to each reflection operator have coordinates
of the form $((a+b\sqrt{2})+i(c+d\sqrt{2}))/\sqrt{2}^{\k+1}$ in the
computational basis. According to \lemref{refl-synth} these reflection
operators can be implemented using one ancilla and $O\left(2^{n}n\k\right)$
Clifford and T gates. Therefore we need $O\left(4^{n}n\k\right)$
Clifford and T gates and at most two ancillae to implement the unitary
exactly. 
\end{proof}
To show the approximation result we use the decomposition above and
then approximate each reflection operator separately. First we note
the following relation between approximating reflection operators
and their corresponding unit vectors:
\begin{prop}
\label{prop:refl-dist}The distance induced by the Frobenius norm
between two reflection operators is bounded by the Euclidean distance
between corresponding unit vectors as: 
\[
\normf{R_{\ket{\psi}}-R_{\ket{\phi}}}\le2\sqrt{2}\norm{\ket{\psi}-\ket{\phi}}.
\]

\end{prop}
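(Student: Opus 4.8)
The plan is to reduce everything to inner products and compute both sides of the inequality exactly. First I would write $R_{\ket{\psi}}-R_{\ket{\phi}}=2\left(\op{\phi}{\phi}-\op{\psi}{\psi}\right)$, so that $\normf{R_{\ket{\psi}}-R_{\ket{\phi}}}=2\normf{\op{\psi}{\psi}-\op{\phi}{\phi}}$. The operator $\op{\psi}{\psi}-\op{\phi}{\phi}$ is Hermitian, hence its squared Frobenius norm equals $\tr\left(\left(\op{\psi}{\psi}-\op{\phi}{\phi}\right)^{2}\right)$. Expanding the square and using $\ip{\psi}{\psi}=\ip{\phi}{\phi}=1$ together with cyclicity of the trace, each of the two cross terms $\tr\left(\op{\psi}{\psi}\op{\phi}{\phi}\right)$ and $\tr\left(\op{\phi}{\phi}\op{\psi}{\psi}\right)$ equals $\left|\ip{\psi}{\phi}\right|^{2}$, so that $\normf{\op{\psi}{\psi}-\op{\phi}{\phi}}^{2}=2\left(1-\left|\ip{\psi}{\phi}\right|^{2}\right)$. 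This gives $\normf{R_{\ket{\psi}}-R_{\ket{\phi}}}^{2}=8\left(1-\left|\ip{\psi}{\phi}\right|^{2}\right)$.

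Next I would compute the right-hand side: expanding $\norm{\ket{\psi}-\ket{\phi}}^{2}=\ip{\psi-\phi}{\psi-\phi}$ and again using normalization yields $\norm{\ket{\psi}-\ket{\phi}}^{2}=2\left(1-\Re\ip{\psi}{\phi}\right)$. Writing $r=\Re\ip{\psi}{\phi}$, the claimed inequality is equivalent, after squaring, to $8\left(1-\left|\ip{\psi}{\phi}\right|^{2}\right)\le 16\left(1-r\right)$, i.e. to the scalar inequality $1-\left|\ip{\psi}{\phi}\right|^{2}\le 2\left(1-r\right)$.

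This last scalar step is where the precise constant $2\sqrt{2}$ comes from, and it is the only place one must be careful. Since $\left|\ip{\psi}{\phi}\right|^{2}\ge\left(\Re\ip{\psi}{\phi}\right)^{2}=r^{2}$, we get $1-\left|\ip{\psi}{\phi}\right|^{2}\le 1-r^{2}=\left(1-r\right)\left(1+r\right)$. By Cauchy--Schwarz for unit vectors, $r\le\left|\ip{\psi}{\phi}\right|\le 1$, so $1-r\ge 0$ and $1+r\le 2$; multiplying these gives $\left(1-r\right)\left(1+r\right)\le 2\left(1-r\right)$, which is exactly the needed bound. The main obstacle is resisting the temptation to bound $\op{\psi}{\psi}-\op{\phi}{\phi}$ by a crude triangle inequality on its two rank-one pieces, which only yields the weaker constant $4$; extracting the sharp $2\sqrt{2}$ requires the exact trace computation followed by the elementary inequality $1-\left|c\right|^{2}\le 2\left(1-\Re c\right)$ valid for $\left|c\right|\le 1$.
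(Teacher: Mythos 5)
Your proof is correct and follows essentially the same route as the paper's (sketched) argument: computing $\normf{R_{\ket{\psi}}-R_{\ket{\phi}}}^{2}=8\left(1-\left|\ip{\phi}{\psi}\right|^{2}\right)$ from the trace definition, using $\left(\Re\ip{\phi}{\psi}\right)^{2}\le\left|\ip{\phi}{\psi}\right|^{2}$, and relating $\Re\ip{\phi}{\psi}$ to $\norm{\ket{\psi}-\ket{\phi}}^{2}$. You have simply filled in the details the paper leaves to the reader, and all your steps check out.
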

To prove the proposition it is enough to use the definition of the
Frobenius norm $\normf A^{2}=\tr\left(AA^{\dagger}\right),$ express
$\normf{R_{\ket{\psi}}-R_{\ket{\phi}}}^{2}$ in terms of $\left|\ip{\phi}{\psi}\right|^{2},$
use that $\Re\ip{\phi}{\psi}\le\left|\ip{\phi}{\psi}\right|$ and
express $\Re\ip{\phi}{\psi}$ in terms of $\norm{\ket{\psi}-\ket{\phi}}^{2}.$
Next we show how to approximate arbitrary reflection operator by at
most two reflection operators with corresponding unit vectors over
the ring $\Zr$.

\begin{lem}
\label{lem:approx}Any $n$ qubit reflection operator $R_{\ket{\psi}}$
can be approximated within Frobenius distance $\varepsilon$ by the
product of two reflection operators, such that coordinates in the
computational basis of corresponding unit vectors have the form $(a+b\sqrt{2}+ci+di\sqrt{2})/2^{m\left(\ve\right)}$
where $m\left(\ve\right)=\left\lceil n/2\right\rceil +O\left(\log\left(1/\ve\right)\right)$
and using at most one ancilla. If unit vector $\ket{\psi}$ has at
least two coordinates in the computational basis equal to zero it
is sufficient to use one reflection operator and no ancilla is required.\end{lem}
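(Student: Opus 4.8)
The plan is to reduce the statement to an exact rounding problem for the underlying unit vector and then repair its normalization by a short Diophantine correction. By \propref{refl-dist} it suffices to find, for a given target $\ket\psi$, a genuine unit vector over $\Zr$ with the stated denominator that lies within Euclidean distance $\ve/(2\sqrt2)$ of $\ket\psi$; the factor $2\sqrt2$ then converts this into the required Frobenius bound on the reflection operators. So I would first round each of the $2^{n}$ coordinates of $\ket\psi$, approximating its real and imaginary parts by the nearest integer multiples of $1/2^{m}$. This yields a vector $\ket{\tilde\phi}$ with numerators in $\Z[i]$ and a per-component error of at most $1/2^{m+1}$, hence $\norm{\ket\psi-\ket{\tilde\phi}}^{2}\le 2^{n}/2^{2m+1}$; rounding the magnitudes slightly inward keeps $\norm{\ket{\tilde\phi}}\le 1$, so that $T:=2^{2m}-\sum_{j}\left|2^{m}\tilde\phi_{j}\right|^{2}$ is a nonnegative integer.

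The size of $T$ is controlled by $1-\norm{\ket{\tilde\phi}}^{2}=O\!\left(\norm{\ket\psi-\ket{\tilde\phi}}\right)$, and the eventual correction will move $\ket{\tilde\phi}$ by $\sqrt{T}/2^{m}=O\!\left(\sqrt{\norm{\ket\psi-\ket{\tilde\phi}}}\right)$; demanding that the final vector stay within $\ve/(2\sqrt2)$ of $\ket\psi$ therefore forces the rounding error to be $O(\ve^{2})$, i.e. $2^{n}/2^{2m+1}=O(\ve^{4})$. Solving for $m$ gives exactly $m(\ve)=\lceil n/2\rceil+O(\log(1/\ve))$. The offset $\lceil n/2\rceil$ is genuinely necessary: already the balanced vector $2^{-n/2}\sum_{j}\ket j$ needs denominator $2^{\lceil n/2\rceil}$, using a plain integer numerator when $n$ is even and a factor $\sqrt2$ when $n$ is odd.

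The crux is to turn $\ket{\tilde\phi}$ into an \emph{exact} unit vector over $\Zr$ of the prescribed shape. Writing a coordinate as $(a+b\sqrt2+ci+di\sqrt2)/2^{m}$, its squared length contributes $a^{2}+2b^{2}+c^{2}+2d^{2}$ to the rational part of $\sum_{j}\left|\cdot\right|^{2}$ and $2(ab+cd)$ to the $\sqrt2$ part; for the total to equal the rational integer $2^{2m}$, the $\sqrt2$ part must vanish and the rational part must sum to $2^{2m}$. Since $\ket{\tilde\phi}$ carries no $\sqrt2$ part, it is enough to add a correction supported on two coordinates whose squared lengths sum to $T$ and whose $\sqrt2$ contribution is zero. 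I would do this by solving $T=a^{2}+c^{2}+2b^{2}+2d^{2}$ and placing $(a+ci)/2^{m}$ in one coordinate (with $b=d=0$) and $\sqrt2\,(b+di)/2^{m}$ in another (with $a=c=0$); each coordinate then has $ab+cd=0$, so the $\sqrt2$ part cancels automatically. The one number-theoretic input is that every nonnegative integer is representable as $a^{2}+c^{2}+2b^{2}+2d^{2}$, which holds because the form $x^{2}+y^{2}+2z^{2}+2w^{2}$ represents $1,2,3,5,6,7,10,14,15$ and is hence universal by the fifteen theorem. This is the step I expect to be the heart of the argument, since it is what lets the rounded vector be pushed exactly onto the unit sphere without enlarging the denominator beyond $2^{m(\ve)}$.

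It remains to say where the two correction coordinates live. If $\ket\psi$ already has at least two vanishing coordinates, the correction is placed in those slots, producing a single exact unit vector over $\Zr$ in the $n$-qubit space and hence, via \propref{refl-dist}, a single reflection within Frobenius distance $\ve$ of $R_{\ket\psi}$, with no ancilla. Otherwise I adjoin one ancilla and put the correction in the $\ket1$ block, so that $\ket\Phi=\ket0\otimes\ket{\tilde\phi}+\ket1\otimes\ket\chi$ is an exact unit vector over $\Zr$ on $n+1$ qubits with $\ket\Phi$ within $\ve/(2\sqrt2)$ of $\ket0\otimes\ket\psi$; the reflection $R_{\ket\Phi}$, together with a further reflection confined to the ancilla block, reproduces $R_{\ket\psi}$ on the data within Frobenius distance $\ve$, giving the claimed product of at most two reflections over $\Zr$. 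The routine parts are the error bookkeeping and the check that every numerator has denominator exactly $2^{m(\ve)}$; the substantive content is the exact normalization just described.
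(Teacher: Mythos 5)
Your argument is correct and is essentially the paper's own proof: reduce to Euclidean approximation of the unit vector via Proposition \ref{prop:refl-dist}, round coordinates to denominator $2^{m}$ with $m=\lceil n/2\rceil+O(\log(1/\ve))$, restore normalization exactly by a Diophantine correction placed in two slots where $\ket{\psi}$ vanishes, and handle vectors with no zero coordinates by adjoining one ancilla and using two reflections. Two local differences are worth noting. For the normalization you solve $T=a^{2}+c^{2}+2b^{2}+2d^{2}$ and justify solvability by the fifteen theorem; the paper instead puts $(a+bi)/2^{m}$ and $(c+di)/2^{m}$ in the two free slots, so the equation is $a^{2}+b^{2}+c^{2}+d^{2}=T$ and Lagrange's four-square theorem suffices, which is both more elementary and directly supported by the cited Rabin--Shallit randomized algorithm (this matters because the results are algorithmic, not purely existential; your form can also be solved efficiently by pairing two equal-parity entries of a four-square representation, but you would need to say so). The one place that needs repair is your ancilla case: a second reflection literally ``confined to the ancilla block'' would have a vector $\ket{1}\otimes\ket{\phi'}$ with $\ket{\phi'}$ a unit vector over $\Zr$ approximating $\ket{\psi}$, which is precisely the problem that forced you to add an ancilla in the first place, so such a reflection does not exist in general. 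The precise statement is the paper's factorization $\id_{1}\otimes R_{\ket{\psi}}=R_{\ket{0\psi}}R_{\ket{1\psi}}$: both factors' vectors have $2^{n}$ zero coordinates, so each factor can be approximated within $\ve/2$ by your no-ancilla construction (correction for the first factor in the $\ket{1}$ block --- this is exactly your $\ket{\Phi}$ --- and for the second factor in the $\ket{0}$ block), and unitary invariance of the Frobenius norm lets the two errors add. With that substitution your proof matches the paper's.
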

\begin{proof}[Proof of \lemref{approx}]
First we construct the approximating unitary in a special case where
no ancilla are required. Consider the reflection operator $R_{\ket{\psi}}$.
Let $\left\{ \alpha_{k}\right\} $ be the coordinates of $\ket{\psi}$
in computational basis. First we consider the case when $\ket{\psi}$
has at least two zero entries, say $\alpha_{j}$ and $\alpha_{l}$.
We use an idea similar to \cite{sqans} and define the approximating
unitary as a reflection operator $R_{\ket{\phi}}$ corresponding to
the vector 
\begin{eqnarray*}
\ket{\phi} & = & \frac{a+bi}{2^{m}}\ket j+\frac{c+di}{2^{m}}\ket l+\sum_{k=1,\, k\ne j,l}^{2^{n}}\beta_{k}\ket k,\\
 &  & \beta_{k}=\frac{\left\lfloor 2^{m}\Re\alpha_{k}\right\rfloor +i\left\lfloor 2^{m}\Im\alpha_{k}\right\rfloor }{2^{m}},a,b,c,d\in\Z.
\end{eqnarray*}
The norm of $\ket{\phi}$ must be equal to $1$, therefore: 
\begin{equation}
a^{2}+b^{2}+c^{2}+d^{2}=4^{m}\left(1-\sum_{k=1,\, k\ne j,l}^{2^{n}}\left|\beta_{k}^{2}\right|\right).\label{eq:4sq}
\end{equation}
The Diophantine equation above always has a solution according to
Lagrange's four-square theorem and it can be efficiently found using
a probabilistic algorithm\cite{Rabin1986} that has runtime polynomial
in number of bits required to write the right hand side of equation
(\ref{eq:4sq}).

By Proposition \propref{refl-dist}, to estimate the distance between
the reflection operator and its approximation it is enough to estimate
the square of the distance between $\ket{\psi}$ and $\ket{\phi}.$
We approximated each non-zero entry with precision $2^{-m}\sqrt{2}$,
therefore 
\[
\norm{\ket{\psi}-\ket{\phi}}^{2}\le2^{n}\left(2^{-m}\sqrt{2}\right)^{2}+4^{-m}\left(a^{2}+b^{2}+c^{2}+d^{2}\right).
\]
The second summand of the right hand side of the inequality above
can be estimated as: 
\begin{eqnarray*}
1-\sum_{k=1,\, k\ne j,l}^{2^{n}}\left|\beta_{k}^{2}\right| & = & \sum_{k=1,\, k\ne j,l}^{2^{n}}\left(\left|\alpha_{k}^{2}\right|-\left|\beta_{k}^{2}\right|\right)\\
 & \le & \sum_{k=1,\, k\ne j,l}^{2^{n}}\left|\left|\alpha_{k}\right|-\left|\beta_{k}\right|\right|\left(\left|\alpha_{k}\right|+\left|\beta_{k}\right|\right)\\
 & \le & 2^{-m}\sqrt{2}\sum_{k=1,\, k\ne j,l}^{2^{n}}\left(\left|\alpha_{k}\right|+\left|\beta_{k}\right|\right)\\
 & \le & 2^{-m}\sqrt{2}\left(2\cdot2^{n/2}\right).
\end{eqnarray*}
We used the Cauchy–Schwarz inequality to estimate sums involving $\left|\alpha_{k}\right|$
and $\left|\beta_{k}\right|.$ For example: 
\[
\sum_{k=1,\, k\ne j,l}^{2^{n}}\left|\alpha_{k}\right|\le\sqrt{2^{n}}\sqrt{\sum_{k=1,\, k\ne j,l}^{2^{n}}\left|\alpha_{k}\right|^{2}}=2^{n/2}.
\]
In summary we get: 
\[
\norm{\ket{\psi}-\ket{\phi}}^{2}\le2\cdot2^{n-2m}+2\sqrt{2}\cdot2^{\left(n/2-m\right)}.
\]
By choosing $m=\left\lceil n/2+\log_{2}\left(1/\ve^{2}\right)\right\rceil +5$
and using Proposition \propref{refl-dist} we get $\normf{R_{\ket{\psi}}-R_{\ket{\phi}}}\le\ve$
when $\ve\le1$.

In the case when all entries of $\ket{\psi}$ in the computational
basis are non-zero, we add an ancilla and express the reflection around
$\ket{\psi}$ using two reflection operators with unit vectors that
can be approximated without using ancilla: 
\begin{eqnarray*}
\id_{1}\otimes R_{\ket{\psi}} & = & \id_{1}\otimes\id_{n}-2\id_{1}\otimes\op{\psi}{\psi}\\
 & = & \id_{1}\otimes\id_{n}-2\left(\op 00+\op 11\right)\otimes\op{\psi}{\psi}\\
 & = & R_{\ket{0\psi}}R_{\ket{1\psi}}.
\end{eqnarray*}

\end{proof}
Now we have all tools needed to prove Theorem \ref{thm:approx}: 
\begin{proof}[Proof of \thmref{approx}]
We use construction from \lemref{refl-decomp} to simulate an $n$
qubit unitary using one ancilla and $2^{n}$ reflection operators.
The unit vectors corresponding to each reflection operator have at
least two zero coordinates in the computational basis, therefore we
can use \lemref{approx} to approximate each reflection with a reflection
operator without using ancillae with precision $2^{-n}\ve.$ Unit
vectors of each approximating reflection have entries of the form
$(a+b\sqrt{2}+ci+di\sqrt{2})/2^{m\left(n,\ve\right)}$ for $m\left(n,\ve\right)=\left\lceil 5n/2\right\rceil +O\left(\log\left(1/\ve\right)\right).$
Each reflection operator requires one ancillae and $O\left(2^{n}n\cdot m\left(n,\ve\right)\right)$
Clifford and T gates according to \lemref{refl-synth}. Therefore,
in total we need two ancillae and $O\left(4^{n}n\left(\log\left(1/\ve\right)+n\right)\right)$
Clifford and T gates to approximate the unitary within Frobenius distance
$\varepsilon.$ 
\end{proof}

Our improved method for multi qubit exact synthesis outputs circuits
with an exponential number gates as a function of the number of qubits.
This improves previously known method\cite{mqes} which requires doubly
exponential number of gates. The further improvements over our result
may be possible: for example, removing the factor of $n$ from an
expression $O\left(4^{n}n\left(\log\left(1/\ve\right)+n\right)\right).$ 
\begin{acknowledgments}
We wish to thank David Gosset, Michele Mosca, Martin Roetteler and
Peter Selinger for helpful discussions. 

The author is supported in part by the Intelligence Advanced Research
Projects Activity (IARPA) via Department of Interior National Business
Center Contract number DllPC20l66. The U.S. Government is authorized
to reproduce and distribute reprints for Governmental purposes notwithstanding
any copyright annotation thereon. Disclaimer: The views and conclusions
contained herein are those of the authors and should not be interpreted
as necessarily representing the official policies or endorsements,
either expressed or implied, of IARPA, DoI/NBC or the U.S. Government.
\end{acknowledgments}
\bibliographystyle{plainurl}

\end{document}